\newtheorem{theorem}{Theorem}
\newtheorem{proposition}[theorem]{Proposition}
\newtheorem{lemma}[theorem]{Lemma}
\theoremstyle{definition}
\newtheorem{definition}{Definition}
\newtheorem{assumption}{Assumption}
\newtheorem{remark}{Remark}
\def\beq{\begin{eqnarray}}
\def\eeq{\end{eqnarray}}
\def\beqq{\begin{eqnarray*}}
\def\eeqq{\end{eqnarray*}}
\def\beeq{\begin{eqnarray*}}
\def\eeeq{\end{eqnarray*}}
\def\be{\begin{equation}}
\def\ee{\end{equation}}
\newcommand{\bea}{\begin{eqnarray}}
\newcommand{\eea}{\end{eqnarray}}
\newcommand{\beaa}{\begin{eqnarray*}}
\newcommand{\eeaa}{\end{eqnarray*}}
\newcommand{\barr}{\begin{array}}
\newcommand{\earr}{\end{array}}
\newcommand{\benum}{\begin{enumerate}}
\newcommand{\eenum}{\end{enumerate}}
\newcommand{\norm}[1]{\left\lVert#1\right\rVert}
\def\f{\frac}
\def\Langle{\Bigl\langle}
\def\Rangle{\Bigr\rangle}
\def\BF={{\mathbb{F}}}
\def\BG={{\mathbb{G}}}
\def\BH={{\mathbb{H}}}
\def\bOne={{\bf 1}}
\def\CA{{\cal A}}
\def\CF{{\cal F}}
\def\CS{{\cal S}}
\def\CT{{\cal T}}
\def\dee{{\rm d}}
\def\diag{{{\rm diag}}}
\def\qed{\hfill$\sqcap\kern-7.0pt\hbox{$\sqcup$}$\\}
\def\BBC{\mathbb{C}}
\def\BBE{\mathbb{E}}
\def\BBP{\mathbb{P}}
\def\BBR{\mathbb{R}}
\def\BBZ{\mathbb{Z}}
\def\Ndft{{\rm Ndft}}
\def\omegamax{{\rm omegamax}}
\def\Bin{{\rm Bin}}
\def\One{{\bf 1}}
 \def\dee{\mathrm{d}}
\begin{document}

\title{\normalsize \bf 	COVID-19: ANALYTICS OF CONTAGION ON INHOMOGENEOUS RANDOM SOCIAL NETWORKS}
\author{\footnotesize T. R. HURD \vspace{-4ex}\\
\footnotesize \it Mathematics \& Statistics, McMaster University, 1280 Main St. West \vspace{-5ex}\\
\footnotesize \it Hamilton, Ontario, L8S 4L8, Canada \vspace{-5ex}\\
\footnotesize \it hurdt@mcmaster.ca}

%
\date{\footnotesize 
\today}
\maketitle

\begin{abstract}

Motivated by the need for robust models of the Covid-19 epidemic that adequately reflect the extreme heterogeneity of humans and society, this paper presents a novel framework that treats a population of $N$ individuals as an inhomogeneous random social network (IRSN). The nodes of the network represent individuals of different types and the edges represent significant social relationships.  An epidemic is pictured as a contagion process that develops day by day, triggered by a seed infection introduced into the population on day $0$. Individuals'  social behaviour and health status are assumed to vary randomly within each type, with probability distributions that vary with their type. A formulation and analysis is given for a SEIR (susceptible-exposed-infective-removed) network contagion model, considered as an agent based model, which focusses on the number of people of each type in each compartment each day. The main result is an analytical formula valid in the large $N$ limit for the stochastic state of the system on day $t$ in terms of the initial conditions. The formula involves only one-dimensional integration.  The model can be implemented numerically for any number of types by a deterministic algorithm that efficiently incorporates the discrete Fourier transform. While the paper focusses on fundamental properties rather than far ranging applications, a concluding discussion addresses a number of domains, notably public awareness, infectious disease research and public health policy, where the IRSN framework may provide unique insights.   \\

\noindent{\bf Key words:\ }
Social network, infectious disease model,  COVID-19, complex systems, agent based model, Poisson random graphs.\\

\noindent{\bf MSC:}
05C80, 91B74, 91G40, 91G50

\end{abstract}
\renewcommand{\thefootnote}{\alph{footnote}}

Ê

Ê
Ê

\section{Introduction}
\label{sec:1}
Heterogeneity proliferates in human society at every level, and new types of mathematical modelling are needed to understand how these many layers of heterogeneity interweave and influence people's lives. The COVID-19 pandemic is a singularly far reaching and catastrophic event, and it will continue to negatively impact humanity for a long time to come. Layers of heterogeneity are especially relevant to a deep understanding of an infectious disease like COVID. Viral transmission may be through aerosols, droplets and fomites; the viral load may get absorbed by and do damage to a variety of tissues within the body; people's immune systems function in diverse ways. The virus itself may evolve into different forms. People are tremendously varied in their habits, their friendships, their living arrangements, their range of movements. These multifarious factors are all important to consider, and some will prove to be the most significant factors in determining where the disease will have its gravest damage, and the best actions to take to ameliorate this damage.

Network science has arisen in recent decades as the most helpful conceptual framework for handling potentially overwhelming complexity. Networks can provide the architecture and structure for agent based modelling of contagion, leading to the massive computer simulations such as those of \cite{Ferguson:2020nq} that have been used to develop a comprehensive picture of how such a disease may progress. As we shall show in this paper, network science can also provide shortcuts to dramatically speed up such computations, allowing us to quickly explore a vast array of alternative scenarios of the disease. 

This paper provides a novel network framework for society, called {\em  inhomogeneous random social networks} (IRSNs) and then models the propagation of an infectious disease like COVID-19 in such a society. It can be interpreted as an agent based contagion model, with the useful feature that an analytical shortcut is available for large-scale simulations of the  disease dynamics.   The framework starts with a so-called {\em inhomogeneous random graph (IRG)}, henceforth called the {\em social graph}, whose nodes represent people classified into a finite number of types, interconnected by edges representing their random social contacts. The people in this social network are  provided with random immunity buffers that measure their resistance to the disease and social contact links are labelled by random weights called exposures that quantify the viral load transmitted by infected individuals to their social contacts. Then, when a seed infection is introduced randomly into the population of susceptible individuals,  a sequence of contagion shocks will develop that will be modelled as iterations of a {\em cascade mapping} or  {\em cascade mechanism}.


The main contributions of this paper are:
\begin{enumerate}
  \item Introduction of the {\em inhomogeneous random social network (IRSN) framework} that provides a flexible and scalable architecture for describing a heterogeneous society of size $N$ with complex community structure. Individuals are classified by arbitrary types with random characteristics within each type. 
\item To develop infection cascade models for such networks based on a threshold mechanism for  transmission. This transmission mechanism can incorporate arbitrary dose-response functions, replacing over-simple transmission assumptions typically used in epidemic models. 
\item To develop large $N$ asymptotics for SEIR infection cascades in IRSN models, leading to Theorem \ref{Step1Prop} that provides explicit and efficiently computable recursive probabilistic formulas for the daily update of the state of the disease within the population. 
\item To show how the contagion analytics can be used to provide large scale investigations into potential policy interventions that one might invoke to mitigate or suppress the progress of the contagion. 
\item Overall, to provide a  {\em purely analytical toolkit} for networks with potentially thousands of different types of individuals, that can run on a laptop. The network framework is capable of providing much faster results, with a similar degree of accuracy, than is possible with large-scale agent based epidemic models sometimes used for informing health policy.  \end{enumerate}

Studying the spread of infectious diseases using the tools of network science has a substantial literature, reviewed for example in \cite{doi:10.1098/rsif.2005.0051} and \cite{Danonetal_NetworkEpidemics}. 
The book by \cite{Newman_Networks} provides a broad overview of networks in all areas of science, including applications to epidemic modelling, while 
\cite{PELLIS201558} explores current challenges in network epidemic models. Of particular interest  is the review of epidemic processes in complex networks by \cite{Pastor-Satorrasetal2015}: Many ingredients of the framework developed here can be traced to references described there. In particular, we see there that our model has its roots in the network cascade model of \cite{Watts02}, generalized to allow for random edge weights as in \cite{HurdGlee13}.

The IRSN model is presented here in a form exactly equivalent to a simple agent based model. This equivalence provides important motivation and justification of the underlying assumptions, and gives a vivid intuitive picture for interpreting the IRSN model. An important example of the intuition gained is a form of {\em selection bias} inherent in agent based models, and real epidemics, that we call {\em susceptibility bias}. Susceptibility bias, akin to Darwinian evolution, is the effect that  less resistant individuals tend to be infected earlier, leaving remaining susceptibles who tend to be more resistant than the original population. We will find that accounting for susceptibility bias presents a mathematical difficulty that our framework can partially, but not fully, solve.    In general, the IRSN framework lies between agent based models of the type developed by \cite{Ferguson:2020nq} and the literature on compartment ordinary differential equation models (ODEs) stemming from the pioneering work of \cite{KermMcKe27}. Full exploration of the conceptual links between these three distinct modelling frameworks is a promising avenue to deeper understanding of real world epidemics.

Section 2 of the paper introduces the essential structure of IRSNs and defines the SEIR infection cascade mechanism that characterizes the daily propagation of the disease on such networks. This section also explores the equivalent agent based model, and its heuristic properties.    Section 3 explores the large $N$  analytical properties of the IRSN model, leading to Theorem \ref{Step1Prop} that characterizes the infection cascade mapping on the first day. This result is extended to successive days by an additional mixing assumption,  providing a recursive characterization of the daily infection cascade mapping in the $N=\infty$ limit. Section 4 provides the ingredients for a numerical implementation of the SEIR cascade mapping that uses the discrete Fourier transform. It is shown that the flop count for computing a daily update is $O(M^2\times \Ndft)$ where $M$ is the number of types and $\Ndft$ is the number of lattice points in each one-dimensional integration. This is efficient enough that complex specifications of IRSN models can be explored quickly on a laptop.   Section 5 addresses the issue of calibrating IRSN models to real health and social data. In Section 6, we explore a simple illustration showing how the method can be used to understand potential policy interventions to protect the residents of a seniors' residential centre while a pandemic rages in the community outside. Finally, a concluding discussion addresses how this novel modelling framework can lead to improved understanding of epidemics by practitioners in several different domains.  \\

\noindent{\bf Notation:\ }
\begin{enumerate}
  \item For a positive integer $N$, $[N]$ denotes the set $\{1,2,\dots, N\}$. 
  \item For a  random variable $X$, its  cumulative distribution function (CDF), probability density function (PDF) and characteristic function (CF) will be denoted $F_X, \rho_X=F_X',$ and $ \hat f_X$ respectively. Note that  $ \hat f_X=\CF(\rho_X)$ where $\CF$ denotes the Fourier transform:
  \[ [\CF(f)](k)=\int^\infty_{-\infty}\ e^{ikx}\ f(x)\ \dee x\ ,k\in\BBR\ . \]  
  We also make use of the function $ \hat F_X:=\CF(F_X)$. 
  \item For any event $A$, $\One(A)$ denotes the indicator random variable, taking values in $\{0,1\}$.
  \item Landau's ``big O'' notation $f^{(N)}=O(N^\alpha)$ for some $\alpha\in\BBR$ is used for a sequence $f^{(N)}, N=1,2,\dots$ to mean that  $f^{(N)}N^{-\alpha}$ is bounded as $N\to\infty$.  
  \item The $L^2$ Hermitian inner product of two complex valued functions $f(x), g(x)$ on a domain $D$ is defined to be $\Langle f,g\Rangle_{L^2(D)}=\int_D\ f^*(x)\ g(x)\ \dee x$. The $L^2$ norm of a function $f(x)$ on a domain $D$ is defined to be $\norm{f}_{L^2(D)}=\langle f,f\rangle_{L^2(D)}^{1/2}$. 
\end{enumerate}  
%

 \section{SEIR Model on IRSNs } 
 This section provides the core modelling assumptions of the network epidemic framework, in the classic  {\em susceptible-exposed-infected-recovered} setting (see \cite{AndeMay92})  in which individuals progress through the stages of the disease, moving from compartment to compartment:
 \[ {\rm susceptible\ S}\rightarrow  { \rm exposed\  E}\rightarrow {\rm infective\  I}\rightarrow {\rm removed\  R}\ .\]    
The social network describes a population of individuals as {\em nodes} of a graph, whose {\em undirected edges} represent the existence of a significant social connection. Our network setting for the spread of an infectious disease has the following structure \begin{enumerate}
  \item The population is classified into a finite disjoint collection of ``types'' that represent people's important attributes, such as age, gender, living arrangement, profession, country and location.   
  \item Individuals within a type have random attributes drawn from type-dependent probability distributions.
  \item The network of social contacts, initially random, is taken to be constant during the epidemic. 
  \item The outbreak is monitored in discrete time, with a period $\Delta t$ assumed for convenience to be one day. At the start of the outbreak on day $0$, most of the population is susceptible (S),  but a small number of  individuals are exposed (E) or infective (I).
  \item Each day infective individuals pass on a random viral dose to their {\em infective contacts}, a  random subset of their social contacts. 
    \item A susceptible individual's state of health at the beginning of each day is represented by a random {\em immunity buffer}. During the day, they experience an accumulation of random viral doses through their infective contacts, and if the total viral load exceeds their buffer they become exposed, meaning infected but not yet infectious, and are moved into  compartment E. 
 \item Each day certain individuals move from  E to I, meaning they become infectious. Others move from I to removed (R) meaning that they either die or recover and are no longer infectious.
\end{enumerate}

This framework is a  kind of {\em agent based model} (ABM) that focusses solely on the occurrence of infective contacts, simpler than ABMs that also simulate the movements of individuals. Note that for each agent, their actual infection event is modelled as a threshold event that occurs if the total viral load received in a period of time $\Delta t=1$ day through multiple contacts exceeds their natural immunity.

The population with its social structure will be represented at any moment as an inhomogeneous random social network, or IRSN. An IRSN is the specification of a multidimensional random variable that captures two levels of structure. The {\em primary level}, called the {\em social graph}, is an undirected random graph with $N$ nodes labelled by a type classification, where each undirected edge represents the existence of a significant social connection, such as a family, collegial or friend relationship.  The {\em secondary layer} specifies the {\em infective contacts, mutual exposures and health} of people. 
Inhomogeneity in the IRSN model arises through classifying people by a finite number of types that can account for a wide range of attributes.     

It is important to note that the IRSN will be assumed to change over time in a prescribed fashion: the primary level remains constant, while the secondary layer varies stochastically each daily time step. The primary level is fixed because the calibration of the social graph is assumed to be based on studies such as \cite{MOssongPolymod2008} and  \cite{PremCookJit17} that studied contact data gathered  over many years prior to the outbreak. On the other hand, the secondary layer changes to reflect the stochastic nature of the pandemic on a daily scale. 
 
\subsection{Social Graph} The social graph is modelled as an undirected inhomogeneous random graph (IRG),  generalizing Erd\"os-Renyi random graphs, in which  edges are drawn independently between unordered pairs of nodes, not with equal likelihood but with likelihood that depends on their types. This class of random graph has its origins in \cite{ChungLu02} and has been studied in generality in \cite{BolSvaRio07} and the textbook by \cite{vdHofstad16}.  

\begin{assumption}\label{A1}[Social Graph] The primary layer of an IRSN, namely the social graph ${\rm IRG}(\BBP,\kappa,N)$,  is an inhomogeneous random graph with $N$ nodes labelled by $v\in [N]$. It can be defined by two collections $\CT,\CA$ of random variables: $T_v$ for $ v\in[N]$ and $A_{vw}$ for  $(v,w)\in [N]\times [N]$ .
\begin{enumerate}
\item Each node $v\in [N]$, representing a person, has type  $T_{v}$ drawn independently with probability $\BBP(T)$ from a finite list of {\em types} $[M]$ of cardinality $M\ge 1$.  Note that $\sum_{T\in[M]}\BBP(T)=1$.
\item Each undirected edge $(v,w)\in [N]\times [N]$ corresponds to a non-zero entry of the symmetric random {\em adjacency matrix} $A$. For each pair $(v,w)$,  $A_{vw}=A_{wv}$ is the indicator for $w$ to be (significantly) socially connected to $v$.  Conditioned on the collection of all types $\{T_{v}\}$, the collection of edge indicators  $\{A_{vw}\}$ is an independent family of Bernoulli random variables with probabilities 
  \be\label{PMK} \BBP[A_{vw}=1\mid T_v=T,T_w=T']=(N-1)^{-1}\kappa(T,T')\One(v\ne w)\ . \ee

  \end{enumerate}
  \end{assumption}

It is an important observation that the sequence of  ${\rm IRG}(\BBP,\kappa,N)$ with the same $\BBP,\kappa$ and varying size $N$ have uniform probabilistic characteristics that tend to a central limit as $N\to\infty$.  In particular,  the {\em probability mapping kernel} $\kappa$, the symmetric matrix that determines the likelihood that two people $v,w$ of the given types have a social connection, is divided by $N-1$ to ensure this uniformity and sparseness of the graph for large $N$.  For consistency we require that $N-1\ge \max_{T,T'}\kappa(T,T')$.

\subsection{Infective Contacts, Viral Exposures and Immunity Buffers}\label{healthsection} The relevant health attributes of all people are summarized by an independent collection of multivariate random variables,  conditioned on the social graph.
\begin{definition}\label{healthbuffer} 
\begin{enumerate}
\item The {\em infective contact indicator pair} between $w$ and $v$ is a pair of Bernoulli variables $(\zeta_{vw},\zeta_{wv})$. $\zeta_{vw}A_{vw}=1$ means that the social relationship between $v$ and $w$ leads to a {\em close infective contact} on a given day, such that when $v$ is infectious a viral dose will be transmitted to $w$. 
  \item 
The {\em potential viral exposure pair} between $w$ and $v$ is a pair $(\Omega_{vw},\Omega_{wv})$ of positive values: $\Omega_{vw}$ represents the viral load transmitted from $v$ to $w$ should $v,w$ have a single {\em infective contact}, and if $v$ is infective. 
  \item 
 The {\em immunity buffer} $\Delta_v$ of node $v$  is a non-negative value that represents the resistance of that person to the virus. 
\end{enumerate}
\end{definition}



\begin{assumption}\label{A2}[Infective Contacts, Viral Exposures and Immunity Buffers]
The secondary layer of an IRSN, the collection of infective contacts, potential exposures and immunity buffers $z_{vw}, \Omega_{vw}, \Delta_v$   are non-negative random variables that are chosen to be independent of $\{A_{vw}\}$,  conditioned on $\{T_v\}$.

\begin{enumerate}
\item For each edge $(v,w)$, $(\zeta_{vw},\zeta_{wv})$ is a bivariate Bernoulli random variable. Conditioned on $T_v=T,T_w=T'$, $\zeta_{vw}=1$  with probability $z(T,T')$. 
 \item For each edge $(v,w)$, $(\Omega_{vw}, \Omega_{wv})$ is a bivariate random variable. Conditioned on $T_v=T,T_w=T'$, $\Omega_{vw}$ has  a continuous marginal density  $\rho_{\Omega}(x\mid T, T')$ supported on $\BBR_+$ and associated distribution functions $F_\Omega(\cdot|T,T'), \hat f_\Omega(\cdot|T,T'), \hat F_\Omega(\cdot|T,T')$.
\item For each individual $v$, $\Delta_v$ conditioned on $T_v=T\in[M]$ has a continuous  density $\rho_\Delta(x|T)= F'_{\Delta}(x|T)$ supported on $\BBR_+$. Thus the cumulative distribution function (CDF) is  \be
 F_{\Delta}(x|T):=\BBP(\Delta_v\le x \mid T)=\int^x_0\rho_\Delta(y|T)\dee y\quad,\forall x\in [0,\infty)\cup\{\infty\}\ .
 \label{BarBufCDF}
\ee
We also record the characteristic function (CF) $\hat f_{\Delta}(\cdot|T)=\CF(\rho_\Delta(\cdot|T))$ and $\hat F_{\Delta}(\cdot|T)=\CF(F_\Delta(\cdot|T))$.
\end{enumerate}
\end{assumption}

In summary, a IRSN of finite size $N$ representing the population of $N$ individuals amounts to a collection of random variables $\{T,A,\zeta,\Omega,\Delta\}$ satisfying Assumptions 1 and 2.

\subsection{Infection Transmission and the Epidemic Trigger}\label{triggersection} 
Infection transmission is a stochastic process that we idealize here as proceeding in discrete time with a period taken for convenience to be one day. This time scale can be thought to correspond to the length of time a transmitted viral load remains active within the body. The most important factors in determining the probability that a susceptible individual becomes infected in a day are the total viral load they accumulate during that day and their immunity buffer. We adopt a {\em threshold infection assumption}, as described for example in  \cite{Pastor-Satorrasetal2015}[Ch X].

We assume  that the random social graph determined by $\{T,A\}$  is chosen at time $t=0$ and remains fixed for the duration of the contagion process. On the other hand, $\{\zeta,\Omega,\Delta\}$ form a conditionally IID sequence of  multivariate random variables that are drawn daily. Thus,  only the secondary layer of the IRSN changes over time. 

Consider a typical day starting at time $t, t\in \BBZ_+$, at which time the compartments S,E,I,R are assumed to be a union over $T\in[M]$ of disjoint random subsets  $S(t|T), E(t|T),I(t|T),R(t|T)$ of the node set $[N]$. The initial compartments, and the possible compartment changes each day are determined by the following rules:
\begin{assumption}\label{A3}[Initial Trigger and Transmission]
\begin{enumerate}
\item The {\em epidemic trigger} at the beginning of day $t=0$ randomly assigns each type $T$ individual to one of the {\em compartments} $S,E,I,R$ independently with  probabilities $s(0|T)=1- e(0|T)-i(0|T),e(0|T),i(0|T),r(0|T)=0$. This determines the initial compartments $S(0)=[N]\setminus(E(0)\cup I(0)), E(0),I(0), R(0)=\emptyset$; these compartments are partitioned by types: $S(0)=\cup_T S(0|T)$, etc.
\item Each day $t\ge 0$, a new collection $\{\zeta^{(t)},\Omega^{(t)},\Delta^{(t)}\}$ of random variables are sampled satisfying Assumption 2. 
\item For each successive day $t\ge 0$, the transmissions  
from S to E, E to I and I to R are determined by the following  SEIR transmission assumptions:
\begin{enumerate}
\item Each $v\in S(t|T)$ will be exposed and moved to $E(t+1|T)$ if 
\be\label{infection} \sum_w \One(w\in  I(t))\zeta^{(t)}_{wv}A_{wv}\Omega^{(t)}_{wv}\ge \Delta^{(t)}_v\ . 
\ee 
 \item Each $v\in E(t|T)$ becomes infectious and moves to $I(t+1|T)$ independently with probability $\gamma(T)\in[0,1]$.
  \item Each $v\in I(t|T)$ is removed to $R(t+1|T)$ independently with probability $\beta(T)=\beta_d(T)+\beta_r(T)\in[0,1]$, where $\beta_b,\beta_r$ are the probabilities of death and recovery respectively. 
\end{enumerate}\end{enumerate}

\end{assumption}

Note that from the above assumptions,  $z(T',T)$ represents the conditional probability on a given day that $w$ and $v$ have an {\em infective contact}, given that they have a social contact and $w\in I(t|T'), v\in S(t|T)$; these are entries of an  $M\times M$ possibly non-symmetric matrix.  

As will be discussed in detail in Section \ref{doseresponse}, the threshold infection assumption captured in \eqref{infection} can be directly interpreted as a {\em dose-response model} as reviewed in \cite{Haas_Doseresponse_2014}. 

%

\subsection{IRSN Agent Based Simulation}\label{secABM} Assumptions 1,2,3 for an infection contagion cascade of Tmax days duration on a finite social network of $N$ people can be realized by the following algorithm for the IRSN-ABM,  a simple agent based model.  
\begin{enumerate}
  \item[Step 0] Initialize the primary level random variables $T_v,A_{vw}$ according to Assumptions \ref{A1}. Set $t=0$ and assign each node $v\in[N]$ independently to one of the compartments $S(0),E(0),I(0),R(0)$ according to the initial probabilities $s(0|T_v), e(0|T_v),i(0|T_v),$ $ r(0|T_v)$ as in Assumption \ref{A3}. 
  \item[Step 1] While $t< \mbox{Tmax}$:\begin{enumerate}
  \item Update the secondary random variables: For each $w\in I(t)$ and $v\in S(t)$,  generate  $\zeta^{(t)}_{wv},\Omega^{(t)}_{wv},\Delta^{(t)}_v,$ according to Assumptions \ref{A2} .
  \item  Exposure:  For $v\in S(t)$, if $\sum_{w\in I(t)} \zeta^{(t)}_{wv}A_{wv}\Omega^{(t)}_{wv}\ge \Delta^{(t)}_v$, move $v$ to $E(t+1)$, otherwise keep $v\in S(t+1)$.
  \item For $v\in E(t)$, independently move $v$ to $I(t+1)$ with probability $\gamma(T_v)$, otherwise keep $v\in E(t+1)$.

  \item For $v\in I(t)$, independently move $v$ to $R(t+1)$ with probability $\beta(T_v)$, otherwise keep $v\in I(t+1)$.
\item Increment $t=t+1$ and repeat Step 1.
\end{enumerate}
\end{enumerate}

Each simulation of the model leads to the collection of random compartments\\ $S(t|T),E(t|T),I(t|T),R(t|T)$ with fractional sizes $$s(t|T)=N^{-1}|S(t|T)|, e(t|T)=N^{-1}|E(t|T)|,i(t|T)=N^{-1}|I(t|T)|, r(t|T)=N^{-1}|R(t|T)|$$
 for days $t=0,1,\dots, \mbox{Tmax}$ and types $T\in[M]$. 
 
 \begin{remark}\label{rem1} The above specification for the IRSN-ABM  is one of many natural possibilities. In particular, choosing to freeze the social graph to remain constant, while making the secondary layer change unpredictably every day is a strong immunological assumption that is open to debate.  For example, one might propose an alternative assumption that the secondary layer exhibits serial correlation, or more strongly, remains constant day by day. 
 
 {\em Susceptibility bias} refers to a type of selection bias, akin to Darwinian evolution, that in a heterogeneous population where individuals have slowly varying innate characteristics, the less resistant individuals tend to succumb to the disease earlier than more resistant individuals, and consequently the susceptible population becomes more resilient over time.  Random variables such as the social graph that remain constant  lead to {\em susceptibility bias}, while making random characteristics serially independent reduces susceptibility bias. The IRSN-ABM will have some susceptibility bias arising from the constancy of the social graph because highly connected individuals of a given type will tend to receive more infectious shocks than less connected individuals of the same type.  \end{remark}

\section{Analytical Asymptotics of the IRSN model}

The IRSN framework just introduced specifies the joint distributions of the random variables $\{T,A,\zeta,\Omega,\Delta\}$ and the random compartments $S(0),E(0),I(0),R(0)$, thereby providing a compact stochastic representation of the state of a network of $N$ individuals at the moment an outbreak is triggered. The same distributional data defines a sequence of random networks with varying $N$. 

The main objective is to study the dependence on $t$ of the size of the random compartments  $S(t|T), E(t|T),I(t|T),R(t|T)$. It is important that we consider relationships between finite $N$ networks and the asymptotic limit $N\to\infty$. To this end, for each $N$ we define the fractional expected sizes to be
\begin{eqnarray}
s^{(N)}(t|T) & := & \frac1{N}\BBE^{(N)}\Bigl[\sum_{v\in[N]}\One(v\in S(t|T))\Bigr] \\
e^{(N)}(t|T) & := & \frac1{N}\BBE^{(N)}\Bigl[\sum_{v\in[N]}\One(v\in E(t|T))\Bigr] \\
i^{(N)}(t|T) & := & \frac1{N}\BBE^{(N)}\Bigl[\sum_{v\in[N]}\One(v\in I(t|T))\Bigr] \\
r^{(N)}(t|T) & := & \frac1{N}\BBE^{(N)}\Bigl[\sum_{v\in[N]}\One(v\in R(t|T))\Bigr] \ .
\end{eqnarray}
By permutation symmetry, $s^{(N)}(t|T) =\BBP^{(N)}(1\in S(t|T)) $ etc. Throughout the remainder of the paper, the quantities  $s(t|T), e(t|T), i(t|T), r(t|T)$ without superscript $(N)$ denote the large $N$ limiting values.

The most important result of the paper will be $N\to\infty$ asymptotic recursion formulas  mapping the quantities $s(t|T), e(t|T), i(t|T), r(t|T)$  from day $t$ to $t+1$ for $t\ge 0$, subject to specified initial conditions for $t=0$. This system of equations is  a discrete dynamical system on a simplex defined by relations $\BBP(T)=s(t|T)+ e(t|T)+ i(t|T)+ r(t|T)$ for each $T\in[M]$,  lying within the hypercube $[0,1]^{4M}$. The mapping generating this dynamics will be called the {\em infection cascade mapping}.  

 \subsection{Degree Distribution of the Social Graph} \label{degree}
 
The distribution of the number of social contacts of nodes in IRGs, in other words their {\em social degree distribution}, has a natural Poisson mixture structure in the large $N$ limit.   By permutation symmetry,  one only needs to consider individual $1$ with arbitrary type $T_1=T$, whose social degree  is defined as  $\dee_1=\sum_{w=2}^N A_{w1}$, a sum of conditionally IID random variables.  Since $e^{ikA_{w1}}=1+A_{w1}(e^{ik}-1)$, each term has the identical conditional characteristic function (CF)
 \be \BBE^{(N)}[e^{ikA_{w1}}\mid T_1=T]=\sum_{T'\in[M]}\BBP(T')\left(1+(N-1)^{-1}\kappa(T',T)(e^{ik}-1)\right)\ .
 \ee
 The conditional CF of $\dee_1$ is the $N-1$ power of this function, and  can be written
\be \BBE^{(N)}[e^{ik\dee_1}\mid T]=\left[1+\frac1{N-1}\sum_{T'}\BBP(T')\kappa(T',T)(e^{ik}-1)+O(N^{-2})\right]^{N-1}\ ,\label{degreeCF}
\ee
to display its asymptotic structure as $N\to\infty$.  
\begin{proposition}\label{Prop1} The characteristic function of the  social degree $\dee_v$ of an individual $v$, conditioned on its type $T\in[M]$, is $2\pi$-periodic   on $\BBR$ and has the $N\to\infty$ limiting behaviour:    
\beq \label{prop1}\hat f^{(N)}(k\mid T)&=&\hat f(k\mid T)\   \left(1+O(N^{-1}) \right)\ ,\\
\hat f(k\mid T)&:=& \exp\left[\lambda(T)(e^{ik}-1)
  \right]\ ,
\eeq
  where $\lambda(T)=\sum_{T'}\lambda(T', T)$ with $\lambda(T', T):=\BBP(T')\kappa(T',T)$. Here,  convergence of the logarithm of \eqref{prop1} is in $L^2([0,2\pi])$.
  \end{proposition}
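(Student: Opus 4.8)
The plan is to recognize the statement as a conditional Poisson limit theorem (a law of rare events) for the degree $\dee_1=\sum_{w=2}^N A_{w1}$, proved directly at the level of characteristic functions starting from \eqref{degreeCF}. First I would record that, conditioned on $T_1=T$, the indicators $\{A_{w1}\}_{w=2}^N$ are IID Bernoulli variables: independence is inherited from Assumption \ref{A1}, and integrating out $T_w\sim\BBP$ gives each the common success probability $p_N:=\lambda(T)/(N-1)$, using symmetry of $\kappa$ together with the definition $\lambda(T)=\sum_{T'}\BBP(T')\kappa(T',T)$. Hence the per-term conditional CF equals exactly $1+p_N(e^{ik}-1)$ (so the $O(N^{-2})$ written inside \eqref{degreeCF} is in fact unnecessary, though harmless), and $\hat f^{(N)}(k\mid T)=\bigl(1+p_N(e^{ik}-1)\bigr)^{N-1}$ holds identically in $k$ and $N$.

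Next I would pass to logarithms. The key point is that once $N$ is large enough that $p_N|e^{ik}-1|\le 2p_N<1$ uniformly in $k$, the base $1+p_N(e^{ik}-1)$ has real part at least $1-2p_N>0$ and modulus bounded away from $0$, so it stays in a fixed disc about $1$ inside the right half-plane; the principal branch of $\log$ is analytic there and $\log\hat f^{(N)}(k\mid T)=(N-1)\log\bigl(1+p_N(e^{ik}-1)\bigr)$ is a well-defined continuous function of $k$. Taylor expanding $\log(1+z)=z+O(|z|^2)$ with $z=p_N(e^{ik}-1)$, the leading term gives $(N-1)p_N(e^{ik}-1)=\lambda(T)(e^{ik}-1)=\log\hat f(k\mid T)$, while the remainder is bounded by $(N-1)\cdot C p_N^2=O(N^{-1})$. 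Since $|e^{ik}-1|\le 2$ for all real $k$, this bound is uniform in $k$, yielding $\sup_k\bigl|\log\hat f^{(N)}(k\mid T)-\log\hat f(k\mid T)\bigr|=O(N^{-1})$.

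Finally I would convert this to the two claimed conclusions. Uniform convergence on the bounded interval $[0,2\pi]$ immediately gives convergence of the logarithms in $L^2([0,2\pi])$, with rate $O(N^{-1})$, which is the sense asserted in the proposition. Exponentiating the uniform estimate gives $\hat f^{(N)}(k\mid T)=\hat f(k\mid T)\,e^{O(N^{-1})}=\hat f(k\mid T)\bigl(1+O(N^{-1})\bigr)$, the multiplicative form of \eqref{prop1}. The $2\pi$-periodicity of both $\hat f^{(N)}$ and $\hat f$ is immediate, since each depends on $k$ only through $e^{ik}$.

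I expect the only genuine subtlety to be the branch-of-logarithm issue together with the need for uniformity in $k$: one must check that $1+p_N(e^{ik}-1)$ neither approaches the origin nor crosses the branch cut as $k$ ranges over $\BBR$, which is exactly what the bound $p_N=O(N^{-1})$ secures, and then verify that the Taylor remainder estimate holds uniformly over all $k$ (via $|e^{ik}-1|\le 2$) rather than merely pointwise. Everything else is the standard $(1+x/n)^n\to e^x$ computation adapted to complex arguments.
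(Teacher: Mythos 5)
Your proof is correct, and its skeleton matches the paper's: both start from the exact per-term conditional CF, raise it to the power $N-1$, pass to logarithms, and Taylor expand. (Your side remark is also right: since $\sum_{T'}\BBP(T')=1$, the per-term CF is exactly $1+p_N(e^{ik}-1)$ with $p_N=\lambda(T)/(N-1)$, so the $O(N^{-2})$ displayed in \eqref{degreeCF} is superfluous.) Where you genuinely diverge is in the key technical step. The paper disposes of the limit by invoking its Lemma \ref{limitlemma}: a general statement about $\frac1y\log(1+yg(x,y))\to g(x,0)$ in $L^2$, proved via Taylor's theorem with integral remainder in the variable $y=(N-1)^{-1}$ together with Fubini, applied here with $g(k,y)=\sum_{T'}\BBP(T')\kappa(T',T)(e^{ik}-1)$. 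You instead prove the estimate directly: principal-branch logarithm, the bound $|\log(1+z)-z|\le C|z|^2$ with $|z|\le 2p_N$, and the observation $|e^{ik}-1|\le 2$, yielding $\sup_{k\in\BBR}|\log\hat f^{(N)}(k\mid T)-\log\hat f(k\mid T)|=O(N^{-1})$. Your route is more elementary, makes explicit the branch-cut issue that the paper leaves implicit, and delivers a strictly stronger conclusion (uniform convergence, which implies the stated $L^2([0,2\pi])$ convergence because the interval is bounded, and exponentiates cleanly to the multiplicative form \eqref{prop1}). What the paper's formulation buys in exchange is reusability: Lemma \ref{limitlemma} is stated for general bivariate $g(x,y)$ on hyperintervals in $\BBR^d$, and is cited again verbatim for the viral-load CF asymptotics leading to Theorem \ref{Step1Prop}; your sup-norm argument transfers to that setting too (since $|\hat f_\Omega(k\mid T',T)-1|\le 2$), but would have to be repeated rather than quoted.
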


\begin{proof}[Proof of Proposition \ref{Prop1}]  The proof is immediate by applying the following Lemma \ref{limitlemma} to the logarithm of \eqref{degreeCF}, 
with $N-1=y^{-1}$ and $ g(k,y) = \sum_{T'\in[M]}\BBP(T')\Bigl[\kappa(T',T)(e^{ik}-1)\Bigr]. $
\end{proof}

 \begin{lemma}\label{limitlemma} Let $I$ be any hyperinterval in $\BBR^d$ and $\bar y>0$. Suppose $g(x,y):I\times[0,\bar y]\to\BBC$ is a bivariate function such that  $g(\cdot,y), \partial_y g(\cdot,y), \partial^2_y g(\cdot,y)$ are pointwise bounded and in $L^2(I)$ for each value $y\in[0,\bar y]$. Then 
 \[ \lim_{y\to 0} \norm{\frac1{y}\log(1+yg(x,y))- g(x,0)}_{L^2}=O(y)\ .\]
\end{lemma}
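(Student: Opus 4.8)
The plan is to read the conclusion as the assertion that $\norm{y^{-1}\log(1+yg(\cdot,y))-g(\cdot,0)}_{L^2(I)}=O(y)$ as $y\to 0^+$, with the bracketed expression understood via its limiting value $g(x,0)$ at $y=0$. I would prove this by splitting the integrand into a piece that isolates the nonlinearity of the logarithm and a piece that isolates the $y$-dependence of $g$:
\[ \frac1y\log(1+yg(x,y))-g(x,0)=\underbrace{\frac1y\Big(\log(1+yg(x,y))-yg(x,y)\Big)}_{D_1(x,y)}+\underbrace{\big(g(x,y)-g(x,0)\big)}_{D_2(x,y)}, \]
and bound $\norm{D_1(\cdot,y)}_{L^2}$ and $\norm{D_2(\cdot,y)}_{L^2}$ separately. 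I would read the hypothesis ``pointwise bounded'' as a uniform bound $|g|,|\partial_y g|,|\partial_y^2 g|\le B$ on $I\times[0,\bar y]$ together with uniformly bounded $L^2$ norms; this is exactly what holds in the application of Proposition \ref{Prop1}, where $I=[0,2\pi]$ and $g$ is independent of $y$, so that $D_2\equiv 0$ there.

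For $D_1$ I would invoke the elementary estimate $|\log(1+u)-u|\le C|u|^2$, valid for complex $u$ with $|u|\le 1/2$ by the Taylor series of the principal logarithm. Taking $u=yg(x,y)$ and restricting to $y\le 1/(2B)$ so that $|u|\le 1/2$ everywhere, this yields the pointwise bound $|D_1(x,y)|\le Cy\,|g(x,y)|^2$. The device that turns this into an $L^2$ estimate, even on an unbounded $I$, is that pointwise boundedness promotes $L^2$ membership to $L^2$ control of the square: $|g(x,y)|^2\le B\,|g(x,y)|$, hence $\norm{|g(\cdot,y)|^2}_{L^2}\le B\,\norm{g(\cdot,y)}_{L^2}$, giving $\norm{D_1(\cdot,y)}_{L^2}\le CyB\,\norm{g(\cdot,y)}_{L^2}=O(y)$.

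For $D_2$ I would write $g(x,y)-g(x,0)=\int_0^y\partial_y g(x,s)\,\dee s$ and apply Minkowski's integral inequality to pull the $L^2$ norm inside, obtaining $\norm{D_2(\cdot,y)}_{L^2}\le\int_0^y\norm{\partial_y g(\cdot,s)}_{L^2}\,\dee s\le y\sup_{s\in[0,\bar y]}\norm{\partial_y g(\cdot,s)}_{L^2}=O(y)$; this is where $\partial_y g(\cdot,s)\in L^2(I)$ is used. Combining the two estimates by the triangle inequality proves the claim. Alternatively, one may avoid the split and Taylor-expand $\phi(y):=\log(1+yg(x,y))$ directly to second order about $y=0$, using $\phi(0)=0$ and $\phi'(0)=g(x,0)$, so that $y^{-1}\phi(y)-g(x,0)=\tfrac{y}{2}\phi''(\xi_x)$; this is the route that genuinely uses the bound on $\partial_y^2 g$, since $\phi''$ contains $g,\partial_y g,\partial_y^2 g$, but it requires extra care to keep the mean-value point $\xi_x$ measurable and the resulting function in $L^2$.

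The step I expect to be the main obstacle is the passage from pointwise to $L^2$ control, especially when $I$ is unbounded: a crude pointwise bound such as $|D_1|\le CyB^2$ need not be square-integrable over an infinite domain, so the argument must be arranged so that every error term retains an honest factor of $g$ or $\partial_y g$ that already lives in $L^2$. The inequality $|g|^2\le B|g|$ together with Minkowski's inequality for $D_2$ is precisely what supplies this, and the one point that genuinely needs pinning down in the statement is the uniform reading of the boundedness hypotheses across $y\in[0,\bar y]$.
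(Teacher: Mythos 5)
Your proof is correct, but it follows a genuinely different route from the paper's. The paper works with the single function $f(x,y):=\log(1+yg(x,y))-yg(x,0)$, observes that $f(x,0)=\partial_y f(x,0)=0$, and applies Taylor's theorem with integral remainder, $f(x,y)=\int_0^y (y-v)\,\partial_y^2 f(x,v)\,\dee v$, then bounds $\norm{f(\cdot,y)}_{L^2}\le \tfrac{y^2}{2}\max_{v\in[0,\bar y]}\norm{\partial_y^2 f(\cdot,v)}_{L^2}$; this is exactly the ``alternative'' route you sketch at the end of your proposal, except that the integral form of the remainder sidesteps the measurability worry about a mean-value point $\xi_x$ that you flag. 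Your primary argument instead splits the error into the logarithm's nonlinearity $D_1$ and the $y$-variation $D_2$ of $g$, handling $D_1$ by the elementary bound $\lvert\log(1+u)-u\rvert\le \lvert u\rvert^2$ plus the observation $\lvert g\rvert^2\le B\lvert g\rvert$ (which keeps an honest $L^2$ factor, important on unbounded $I$), and $D_2$ by the fundamental theorem of calculus and Minkowski's integral inequality. What your route buys is economy of hypotheses: it never touches $\partial_y^2 g$, so it proves the lemma under strictly weaker assumptions (and in the actual application in Proposition \ref{Prop1}, where $g$ is independent of $y$, your $D_2$ vanishes identically and the whole proof reduces to the $D_1$ estimate). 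What the paper's route buys is brevity: one Taylor expansion delivers the $O(y)$ rate in a single stroke, at the cost of differentiating the logarithm twice. Both arguments share the same unavoidable caveat, which you correctly identify: the hypotheses must be read uniformly in $y$ (the paper does this implicitly when it takes $\max_{v\in[0,\bar y]}\norm{\partial_y^2 f(\cdot,v)}^2$ and asserts it is finite), since the stated ``for each $y$'' version would not by itself yield a bound independent of $y$.
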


  \begin{proof}[Proof of Lemma \ref{limitlemma}] Under the assumptions, one can show directly that $f(x,y):=\log(1+yg(x,y))]- yg(x,0)$ satisfies $\lim_{y\to 0}f(x,y)=\lim_{y\to 0}\partial_yf(x,y)=0$ and hence by Taylor's remainder theorem
  \[ f(x,y)=\int^y_0(y-v)  \partial^2_yf(x,v)\dee v\ . \]
  One can also show that $\partial^2_yf(x,v)$ is in $L^2(I)$ for each value $v\in[0,\bar y]$ provided $\bar y>0$ is small enough. Then, by Fubini's Theorem, for $y\in[0,\bar y]$
  \[ \norm{\log(1+yg(x,y))]- g(x,0)}^2\le (\int^y_0(y-v)\dee v)^2\max_{v\in[0,\bar y]}\norm{\partial^2_yf(x,v)}^2\le M y^4\]
  for some constant $M$, from which the result follows.
\end{proof}

Proposition \ref{Prop1} tells us that for different values of $T$, the conditional social degree converges in distribution to a  Poisson random variable with mean parameter $\lambda(T)=\sum_{T'} \lambda(T,T')$. Now, recall that a {\em finite mixture} of a collection of probability distribution functions is the probability distribution formed by a convex combination. Thus  the asymptotic {\em unconditional}  social degree distribution of any individual is a finite mixture with characteristic function:
\be  \hat f(k)= \sum_{T\in[M]}\BBP(T) \hat f(k\mid T)\ .\ee
Each mixture component has a Poisson distribution with Poisson parameters
$\lambda(T)$
 and the mixing variable is the individual-type $T$ with mixing weight $\BBP(T)$.


 \subsection{The First Infection Cascade Step}
The most important quantity on day $1$ is the {\em exposure probability} ${\rm EP}(1|T)$ for a type $T$  individual $v$ that is susceptible on day $0$ to become exposed on day 1. For a finite network of size $N$, by permutation symmetry, we can take $v=1$ and the required conditional probability can be expressed as ${\rm EP}^{(N)}(1|T):=\BBP^{(N)}(1\in E(1|T)|1\in S(0|T))$. By our assumptions, in particular \eqref{infection},  this is 
 \be\label{EP} {\rm EP}^{(N)}(1|T)=\BBP^{(N)}(\Delta^{(0)}_1\le V^{(0)}_1|1\in S(0|T))= \BBP^{(N)}(\Delta^{(0)}_1\le V^{(0)}_1|T_1=T)\ee
 where $V^{(0)}_1$, the total viral load received by $1$, is the sum of viral shocks from $w\ne 1$ to $1$
 \begin{eqnarray}
V^{(0)}_1&=& \sum_{w\ne 1}V^{(0)}_{w1}\ ;\nonumber\\
V^{(0)}_{w1}&=&\sum_{T'\in[M]} A_{w1}\zeta^{(0)}_{w1}\One(w\in I(0|T'))\Omega^{(0)}_{w1} \ .
\end{eqnarray}
As studied in \cite{HurdGlee13}, threshold probabilities such as \eqref{EP} are efficiently computable via characteristic functions. Assuming that $X,Y$ are independent non-negative random variables such that $X$ has a density $\rho_X(x)$ and the CDF $F_Y$ of $Y$ is continuous, and letting these functions have Fourier transforms $\hat f_X:=\CF(\rho_X),\ \hat F_Y:= \CF(F_Y)$, then by the Parseval Identity
\be\label{Parseval} \BBP[Y\le X]=\int^\infty_0\ \rho_X(x)F_Y(x)dx= \f1{2\pi} \int_{-\infty}^\infty \hat F_Y(k)\hat f_X(-k)\ \dee k .
\ee  

 Note that conditioned on $T_1=T$, the shocks $V^{(0)}_{w1}$ for all $w\ne 1$ are independent and identically distributed (IID).   Since $\sum_{T'}A_{w1} \zeta^{(0)}_{w1}\One(w\in I(0|T'))$ is a Bernoulli random variable that is independent of $\Omega^{(0)}_{w1}$,
 \[\exp[{ik\sum_{T'}A_{w1} \zeta^{(0)}_{w1}\One(w\in I(0|T'))\Omega^{(0)}_{w1}}]=1+\sum_{T'}A_{w1}\zeta^{(0)}_{w1}\One(w\in i(0|T'))(e^{ik \Omega^{(0)}_{w1}}-1)\ ,
 \] and hence for any $w\ne 1$ the characteristic function of the shock $V^{(0)}_{w1}$  conditioned on the type $T_1=T$ is given for finite $N$ by  \beq \BBE^{(N)}[e^{ikV^{(0)}_{w1}}\mid T]&=&\nonumber\\
 &&\hspace{-1in}1\ +\ \sum_{T'}\BBE^{(N)}[A_{w1} \zeta^{(0)}_{w1}\One(w\in I(0|T'))\mid  T_1=T]\ \BBE^{(N)}[e^{ik\Omega^{(0)}_{w1}}-1\mid  T_1=T,T_w=T']\nonumber \\
 &=&1+ \sum_{T'}\frac{\kappa(T',T)z(T',T)i(0|T')}{N-1} \Bigl( \hat f_\Omega(k\mid T',T)-1\Bigr) \label{shock1}\ . \eeq
 By the independence of viral shocks, the total viral load $V^{(0)}_1$ has CF\begin{equation}
\label{ }
\BBE^{(N)}[e^{ikV^{(0)}_{1}}\mid T]=\left(1+\sum_{T'}\frac{\kappa(T',T)z(T',T)i(0|T')}{N-1} \Bigl( \hat f_\Omega(k\mid T',T)-1\Bigr) \right)^{N-1}
\end{equation}
 The desired large $N$ approximation is uniform in $k\in\BBR$, and follows by the argument proving Proposition \ref{Prop1}:
  \beq \BBE^{(N)}[e^{ikV^{(0)}_{1}}\mid T]&=&\hat f^{(0)}_{V}(k\mid T)(1+O(N^{-1}))\ ,\\
 \hat f^{(0)}_{V}(k\mid T) &:=&\exp \left(\sum_{T'}\kappa(T',T)z(T',T)i(0|T') \Bigl( \hat f_\Omega(k\mid T',T)-1\Bigr)\ \right) \ .\label{cascade2}\eeq
 
The expected fractional number of type $T$ newly exposed individuals will be
$ {\rm EP}^{(N)}(1|T)s(0|T) $. By combining \eqref{EP} and \eqref{Parseval} with $X=V^{(0)}_1$ and $Y=\Delta_1^{(0)}$, and applying the dominated convergence theorem,  we have
\[ \lim_{N\to\infty} {\rm EP}^{(N)}(1|T)= \f1{2\pi}  \int^\infty_{-\infty} \hat F_\Delta(k|T)\ \hat f^{(0)}_V(-k|T)\ \dee k\ .
\]
The expected fractional number of type $T$ exposed individuals that become infective will be $\gamma(T)e(0|T)$ and expected fractional number of type $T$ infective individuals that are removed will be $\beta(T)i(0|T)$. Putting these pieces together, one obtains the main result. 

 \begin{theorem}\label{Step1Prop} Consider the sequence of IRSNs for all $N$, satisfying all the assumptions in Section 2. Then for each $N$, the fractional expected compartment sizes on day $1$ are 
\begin{eqnarray}
s^{(N)}(1|T) & = &(1-{\rm EP}^{(N)}(1|T))s(0|T)  \\
e^{(N)}(1|T) & = &(1-\gamma(T))e(0|T)+ {\rm EP}^{(N)}(1|T)s(0|T)\\
i^{(N)}(1|T) & = & (1-\beta(T))i(0|T)+\gamma(T) e(0|T)\\
r^{(N)}(1|T) & = & r(0|T)+\beta(T)i(0|T)\ .
\end{eqnarray}
The type $T$ exposure probability on day $1$ is uniformly approximated as $N\to\infty$:\begin{eqnarray}
{\rm EP}^{(N)}(1|T) & = & {\rm EP}(1|T)) (1+O(N^{-1}))\ ,\\
{\rm EP}(1|T) & = &\f1{2\pi}  \int^\infty_{-\infty} \hat F_\Delta(k|T)\ \hat f^{(0)}_V(-k|T)\ \dee k\ , \label{EP_step1}
\end{eqnarray}
where $\hat F_\Delta=\CF(F_\Delta)$ is  given by \eqref{BarBufCDF} and $\hat f^{(0)}_V=\CF(\rho^{(0)}_V)$ is given by \be
 \hat f^{(0)}_{V}(k\mid T) :=\exp \left(\sum_{T'}\kappa(T',T)z(T',T)i(0|T') \Bigl( \hat f_\Omega(k\mid T',T)-1\Bigr)\ \right) \ .
 \ee
\end{theorem}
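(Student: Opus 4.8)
The plan is to treat the two assertions separately: the four finite-$N$ compartment identities are exact bookkeeping, whereas the exposure probability $\mathrm{EP}(1|T)$ requires a genuine $N\to\infty$ limit.

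For the compartment identities I would first invoke permutation symmetry to write $s^{(N)}(1|T)=\BBP^{(N)}(1\in S(1|T))$, and likewise for $e,i,r$. Conditioning on which day-$0$ compartment node $1$ occupies, the transition rules of Assumption \ref{A3} describe node $1$'s day-$1$ compartment as a deterministic-plus-coin-flip update: the $S\to E$ move occurs exactly when the threshold \eqref{infection} is met, which by \eqref{EP} has conditional probability $\mathrm{EP}^{(N)}(1|T)$; the $E\to I$ and $I\to R$ moves are independent Bernoulli events with probabilities $\gamma(T)$ and $\beta(T)$. Since the four originating compartments are disjoint, linearity of expectation assembles the contributions into the stated four equations, exactly for every finite $N$. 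The initial fractional sizes on the right-hand sides carry no $N$-dependence because the epidemic trigger assigns compartments by i.i.d.\ coin flips, so $s^{(N)}(0|T)=s(0|T)$, etc.

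For the exposure probability I would start from \eqref{EP}, $\mathrm{EP}^{(N)}(1|T)=\BBP^{(N)}(\Delta^{(0)}_1\le V^{(0)}_1\mid T_1=T)$, with $\Delta^{(0)}_1$ and $V^{(0)}_1$ conditionally independent given the type. The computation preceding the theorem already yields the exact finite-$N$ characteristic function of $V^{(0)}_1$ as the $(N-1)$-th power of the single-shock function \eqref{shock1}; writing this as $(1+g(k)/(N-1))^{N-1}$ with $g$ the ($N$-independent) exponent appearing in \eqref{cascade2}, I would note that each $\hat f_\Omega$ is a characteristic function, so $|\hat f_\Omega-1|\le 2$ and $\mathrm{Re}(\hat f_\Omega-1)\le 0$. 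Hence $g$ is \emph{bounded} on $\BBR$ with $\mathrm{Re}\,g\le 0$, which upgrades the pointwise limit of Proposition \ref{Prop1}/Lemma \ref{limitlemma} to the uniform statement $\hat f^{(N)}_V(k|T)=\hat f^{(0)}_V(k|T)(1+O(N^{-1}))$ with $|\hat f^{(N)}_V|,|\hat f^{(0)}_V|\le 1$ for all $k$. Feeding this into the Parseval identity \eqref{Parseval} with $X=V^{(0)}_1$, $Y=\Delta^{(0)}_1$ produces \eqref{EP_step1}; the atom of $V^{(0)}_1$ at the origin is harmless because $\Delta^{(0)}_1>0$ almost surely, i.e.\ $F_\Delta(0|T)=0$ by \eqref{BarBufCDF}. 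The bare convergence $\mathrm{EP}^{(N)}(1|T)\to\mathrm{EP}(1|T)$ also follows cleanly from Lévy's continuity theorem, since $V^{(0)}_1\Darrow$ the compound-Poisson law with characteristic function $\hat f^{(0)}_V(\cdot\mid T)$, combined with the boundedness and continuity of $F_\Delta(\cdot|T)$.

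The main obstacle is to make the passage to the limit inside the Fourier integral rigorous while retaining the quantitative $O(N^{-1})$ rate. The Parseval integrand $\hat F_\Delta(k|T)\hat f^{(0)}_V(-k|T)$ is not absolutely integrable in the naive sense, since $\hat F_\Delta$ is the transform of the bounded-but-not-integrable CDF $F_\Delta$ and so carries a non-integrable singularity at $k=0$; the identity \eqref{Parseval} must therefore be read in the regularized/principal-value sense of \cite{HurdGlee13}. I would circumvent this by estimating on the probability side, bounding $|\BBE[F_\Delta(V^{(0)}_1)\mid T]_N-\BBE[F_\Delta(V^{(0)}_1)\mid T]_\infty|$ directly, where the uniform characteristic-function error $\sup_k|\hat f^{(N)}_V-\hat f^{(0)}_V|=O(N^{-1})$ --- a consequence of the boundedness of $g$ --- is exactly what promotes the weak-convergence limit to the claimed uniform $O(N^{-1})$ rate.
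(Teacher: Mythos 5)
Your proposal is correct and follows essentially the same route as the paper: the four compartment identities by permutation symmetry and linearity of expectation, the exact finite-$N$ characteristic function of the viral load as the $(N-1)$-st power of \eqref{shock1}, the uniform $O(N^{-1})$ exponential limit via Proposition \ref{Prop1}/Lemma \ref{limitlemma}, and the Parseval identity \eqref{Parseval} with $X=V^{(0)}_1$, $Y=\Delta^{(0)}_1$ to obtain \eqref{EP_step1}. Your extra observations --- the harmless atom of $V^{(0)}_1$ at zero and the need to interpret the Parseval integral in a regularized sense with the limit justified on the probability side --- address points the paper passes over (it simply invokes dominated convergence), so they refine rather than diverge from the published argument.
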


\subsection{The Mixed Infection Cascade Mapping}
\label{IncremMap}

As discussed in Section \ref{secABM}, the IRSN model exhibits susceptibility bias. Due to the constancy over time of the social graph, highly connected individuals tend to be infected earlier than less connected individuals, and hence the average connectivity of susceptibles decreases over time. This implies that the assumptions underlying Theorem \ref{Step1Prop} do not hold for the infection cascade mapping on subsequent days $t>0$. Indeed, we have not been able to generalize Theorem \ref{Step1Prop} to cope with susceptibility bias for $t>0$.  

Instead, we propose to depart from the IRSN model of Section 2.3 by introducing an additional randomization called {\em mixing} that eliminates the susceptibility bias and ensures that Theorem  \ref{Step1Prop} holds  for $t>0$.  The required form of conditional independence is achieved by introducing for each $t>0$ a random reassignment of the labels $S,E,I,R$ for each type $T$, consistent with the total fractions of nodes in each subcompartment. Specifically, for each $t\ge 0$ we replace Step 1(e) of the IRSN agent based simulation of Section 2.4 by the following:\\\\
    Step 1'(e): Reassign each node $v\in[N]$ independently to one of the compartments $S(t+1),E(t+1),I(t+1),R(t+1)$ according to the  probabilities $s(t+1|T_v), e(t+1|T_v),i(t+1|T_v),$ $ r(t+1|T_v)$. Increment $t=t+1$ and repeat Step 1. \\

Since the proposed mixing is inconsistent with any true agent based model, we call this a {\em pseudo-agent based model}, the IRSN P-ABM. Under model specifications where susceptibility bias of the ABM is small, the large $N$ limit of the P-ABM will closely mimic the corresponding ABM. Under the IRSN P-ABM, the form given by Theorem \ref{Step1Prop} continues to apply for subsequent time steps, and we are justified in proposing the following mapping as a consistent network model for infectious disease spread.  

\bigskip\noindent
{\bf Mixed Infection Cascade Mapping:\ } Consider the limit $N=\infty$ of the sequence of IRSN P-ABMs for all $N$, satisfying all the assumptions in Section \ref{secABM}, with the modified Step 1'(e) just discussed. Then on day $t\ge 1$, 
 \begin{enumerate}
  \item The type $T$ exposure probability  is 
\be\label{poissonEP_t1}{\rm EP}(t|T) = \frac1{2\pi}\int^\infty_{-\infty}  \hat F_\Delta(k|T)\ \hat f^{(t-1)}_V(-k|T)\ \dee k\ , 
\ee
  \item  The transmitted viral shock has PDF  $\rho^{(t-1)}_V(\cdot|T)=\CF^{-1}(\hat f^{(t-1)}_V(\cdot|T))$ with \be
   \hat f^{(t-1)}_{V}(k\mid T) =\exp \left(\sum_{T'}\kappa(T',T)z(T',T)i(t-1|T') \Bigl( \hat f_\Omega(k\mid T',T)-1\Bigr)\ \right) \ .\label{cascade_t}\ee 
 \item   The fractional expected compartment sizes  are 
\begin{eqnarray}
\label{CM1}s(t|T) & = &(1-{\rm EP}(t|T)s(t-1|T)  \\
\label{CM2}e(t|T) & = &(1-\gamma(T))e(t-1|T)+ {\rm EP}(t|T)s(t-1|T)\\
\label{CM3}i(t|T) & = & (1-\beta(T))i(t-1|T)+\gamma(T) e(t-1|T)\\
r(t|T) & = & r(t-1|T)+\beta(T)i(t-1|T)\ .
\label{CM4}\end{eqnarray}

\end{enumerate}

\subsection{Dose-Response Model of Transmission}\label{doseresponse} To obtain a more specific threshold model of transmission, leading to a better understanding of the immune buffers and exposures, this section develops the idea of {\em dose-response}, as discussed in e.g. \cite{Haas_Doseresponse_2014}, as a model of viral transmission. In a simple dose-response model for airborne disease transmission, each  viral dose transmitted from an infective to a susceptible host is assumed to be carried by a very large number $\Omega$ of airborne particles, thought of as either aerosol or droplet. These particles settle on tissues within the host, where each is assumed to have an independent identical small chance $\alpha$ to cause the host to become exposed. The probability that exposure occurs is therefore
\be\label{doseresponse2} P_{exposure}=\sum_{n=1}^N \Bin(N,\alpha, n)\sim 1- e^{-\alpha \Omega} 
\ee
where $\Bin(N,\alpha, \cdot)$ denotes the values of a binomial distribution, and the approximation is the Demoivre-Laplace limit theorem. 

There are many reasons why \eqref{doseresponse2} is oversimple, and it is common to replace it by a more general dose-response relationship
\be\label{doseresponse3} P_{exposure}=F( \Omega) 
\ee
for an increasing function with $F(0)=0, F(\infty)=1$.

 We can view this general dose-response as a threshold model. The parameter $\alpha=\alpha_T$, or the specific function $F=F(\cdot|T)$ can be assumed to depend strongly on the host's type $T$. We should also assume that $\Omega$ is random, depending on all of the infecting individuals and the host's type.  If we assume a type $T$ susceptible's buffer $\Delta_v$ is distributed with CDF $F(\cdot|T)$ and is independent of the  total exposure $\Omega$ which is given as a  random sum of infectious exposures $\sum^K_{k=1} \Omega_{w_kv}$ then, conditioned on the exposures $\Omega_{w_kv} $, the probability of $v$ being exposed will be
\be {\rm EP}(t|T, \Omega)=F(\sum^K_{k=1} \Omega_{w_kv}|T)=\BBP[\Delta_v\le  \sum^K_{k=1} \Omega_{w_kv} | T,\Omega_{w_kv} ]\ .
\ee
consistent with \eqref{EP}.
Taking an expectation over $\Omega$ leads to the  probability of $v$ being exposed  after day $t$, conditioned on $v\in S(t-1|T)$. This will be ${\rm EP}(t|T))s(t-1|T)$ where
\beq {\rm EP}(t|T)&=&\BBP[\Delta_v\le \sum_{w\in[N]} A_{wv}\zeta^{(t-1)}_{wv}\One(w\in I(t-1))\Omega^{(t-1)}_{wv} |v\in S(t-1|T) ]\\
&=&\BBE\Biggl[F\Bigl(\sum_{w\in[N]} A_{wv}\zeta^{(t-1)}_{wv}\One(w\in I(t-1))\Omega^{(t-1)}_{wv}\Bigr)|T_v=T\Biggr]\eeq

For the simplest dose-response \eqref{doseresponse2}, these expectations factorize, and with the help of Lemma 2 one finds 
\beq
 {\rm EP}(t|T)&=&1-\prod_{w\ne v}\BBE\Bigl[1+A_{wv}\zeta^{(t-1)}_{wv}\sum_{T'}\One(w\in I(t-1|T')(e^{-\alpha_T\Omega^{(t-1)}_{wv} }-1)|T_v=T\Bigr]
\\
&&\hspace{-.5in}=\ 1-\Bigl[1+\sum_{T'}\frac{\kappa(T',T)}{N-1}z(T',T)i(t-1|T')\BBE[e^{-\alpha_T\Omega^{(t-1)}_{wv} }-1|T_v=T, T_w=T']\Bigr]^{N-1}\\
&\sim&1- \exp[-\sum_{T'}\tau(T',T)\kappa(T',T)z(T',T)i(t-1|T')]\label{DRexp}
\eeq
Here $\tau(T',T):=\BBE[e^{-\alpha_T\Omega^{(t-1)}_{wv} }-1|T_v=T, T_w=T']=\hat f_\Omega(i\alpha_T|T,T')-1$ is the probability that $v$ becomes exposed from a single viral dose from a type $T'$ infective. 

  When \eqref{CM1} and  \eqref{DRexp} are combined, we obtain 
\be\label{euler}  s(t|T)=\exp[-\sum_{T'}\tau(T',T)\kappa(T',T)z(T',T)i(t-1|T')] s(t-1|T)\ .\ee
This can be recognized as the Euler discretization with time step $\Delta t= 1$ day of the vector-valued ordinary differential equation at the heart of multi-type compartment epidemic models:
\be \frac{s(t|T)}{\dee t} = -\sum_{T'} \alpha(T,T')i(t|T')s(t|T)\ee
with the transmission parameter given by $\alpha(T,T') =\delta t^{-1}\tau(T',T)\kappa(T',T)z(T',T)$. Thus, combining  \eqref{DRexp} with the steps leading to Theorem 2 provides a direct derivation of the classic SEIR ODE model from a particular specification of a more fundamental agent based contagion model.

\section{Discrete Fourier Transform Implementation}  
\label{numerics} 
The core of the numerical implementation of the mixed infection cascade mapping will be to approximate the integral \eqref{poissonEP_t1} for ${\rm EP}(t|T)$ using the Discrete Fourier Transform (DFT). The DFT works most effectively on a grid of nonnegative integers we denote by $[\Ndft]:=\{0,1,2,\dots, \Ndft-1\}$ whose log-size $\log_2(\Ndft)$ is an integer chosen to compromise between precision and computational efficiency. All immunity buffers will be taken to have integer values on $[\Ndft]$ that represent multiples of a unit of viral dose. The exposures will have values on a smaller grid $\{0,1,2,\dots, \omegamax-1\}$. To avoid the {\em aliasing problem} familiar in applications of the DFT, we assume $\Ndft$ is sufficiently large compared to $\omegamax$ so that $\BBP(\sum_{w\in[N]\setminus 1} A_{w1}\zeta^{(t)}_{w1}\Omega^{(t)}_{w1}\One(w\in I(0))\ge \Ndft)$ is a negligible probability when node $1$ has any possible type $T_1$.

Thus we assume that the  PDF and CDF $\rho_X, F_X$ of any continuous random variable $X$ can be replaced by  dimension $\Ndft$ probability vectors with components $\rho_X(x), F_X(x), x\in[\Ndft]$.    
The characteristic function $\hat f_X$  is then replaced by the DFT of $\rho_X$, $\hat f_X:=\CF(\rho_X)$, defined for each $k\in [\Ndft]$ by
\[ \hat f_X(k)=\sum_{x\in[\Ndft]} e^{2\pi i kx/\Ndft} \rho_X(x)\]
The DFT is an invertible linear operator (in fact an isometry under the Euclidean metric) on $\BBC^{\Ndft}$; the inverse DFT $\rho_X=\CF^{-1}(\hat f_X)$ is given by
\[ \rho_X(x)=\Ndft^{-1}\sum_{k\in[\Ndft]} e^{-2\pi i kx/\Ndft} \hat f_X(k)\ .\]
Given two independent positive random variables $X,Y$ with values in $[\Ndft]$, one then has the identities
\[\BBP(X\ge Y)=\sum_{y\in[\Ndft]} F_Y(y)\ \rho_X(y)=\frac1{\Ndft}\sum _{k\in[\Ndft]}\hat F_Y(k) \hat f_X(-k)
\]
where $ \hat F_Y= \CF(F_Y)$.

Based on these identities, with the grid $[\Ndft]$ set in this way, we  can implement the mixed infection cascade mapping given by equations \eqref{CM1}-\eqref{CM4} of Section \ref{IncremMap} with equations \eqref{poissonEP_t1} and \eqref{cascade_t} replaced by
\beq
{\rm EP}(t|T)&=&\frac1{\Ndft}\sum _{k\in[\Ndft]}\hat F_\Delta(k|T) \hat f^{(t-1)}_V(-k|T)\\
\hat f^{(t-1)}_V(k|T)&=& \exp[\sum_{T'\in[M]} R(k,T,T') i(t-1|T')]\label{hatF}\\
R(k,T,T')&=&\kappa(T,T')z(T',T)(\hat f_\Omega(k|T,T')-1)\ .
\eeq

%

One sees that for a single day $t$, the computational complexity of the algorithm is dominated by \eqref{hatF} which amounts to  $O(\Ndft\times M^2)$ flops for the complex matrix-vector multiplication, followed by  $\Ndft\times M$ complex exponentiations. Memory usage is dominated by storing the constant matrix $R$ with $\Ndft\times M^2$ components. Since $\Ndft=2^{10}$ is a typical value, there is clearly no difficulty in computing the general model with several thousand types  on an ordinary laptop.

\section{Calibrating IRSNs} \label{data}

This section  addresses  implementation of the infection cascade model on IRSNs, and its generalizations, for a real world network of $\hat N$ individuals.  The central issue is to construct a sequence of IRSNs of size $N$ increasing to infinity, that is statistically consistent with the real world network when $N=\hat N$. Then the statistical model for $N=\infty$ can be subjected to epidemic triggers with any initial infection probabilities, and the resultant infection cascade analytics developed in Section 3 will yield the chronology of the epidemic, and measures of the resilience of the real world network. 

The type of network data available to policy makers varies widely from one health jurisdiction to another. Here we imagine a minimal dataset for $\hat N=\sum_{T\in[M]} \hat N_T$ individuals classified into $M$ types labelled by $T\in[M]$,  where $\hat N_T$ denotes the number of individuals of type $T$. Individual types, and the population sizes $\hat N_T$, will be assumed not to change over the data sampling period. As a first estimation step, we choose the {\em empirical type distribution}:
\[ \widehat\BBP(T)=\frac{\hat N_T}{\hat N}\ .\]
Typically, this vector is determined by census data. 

\subsection{Social Contact Matrix}

Next suppose for illustration that the interconnectivity, exposures and health statistics of the real network have been observed for an extended period prior to the epidemic. In particular, social connectivity has been observed, and  edges having the meaning of a ``significant social contact'' are drawn between any ordered pair $(v,w)$ of individuals if the average  daily contact time of individual $w$ to individual $v$ exceeds a specified threshold.

 Let the {\em social contact matrix} $\hat E=( \hat E_{T,T'})_{T,T'\in[M]} $ represent the expected total daily number of significant $T$ to $T'$ social contacts in the given population. Such matrices have been studied in great depth, for many countries and communities, and are available in public databases such as  \cite{PremCookJit17}. Following the discussion of Section \eqref{degree}, the average number of $T'$ contacts per type $T$ individual,  $\hat E_{T,T'}/\hat N_T$, is matched to  the conditional mean $\hat\lambda(T', T):=\hat\BBP(T')\widehat\kappa(T',T)$  to identify the  {\em empirical connection kernel}
\be\label{kappa1}  \widehat\kappa(T,T')=
  \frac{\hat E_{T,T'}(\hat N-1)}{\hat N_T\hat N_{T'}}   \  .\ee

Theoretically, social contact matrices can be constructed as a very large sum over {\em settings} $s\in\CS$ that represent the different places people meet, see \cite{mistry2020inferring}. Each setting $s$ is assumed to involve a finite number of people, with an equal likelihood $z(s)\in[0,1]$ of a contact
between any pair. Let $\phi^{(s)}=( \phi^{(s)}_{T,})_{T\in[M]} \in \BBZ_+^{M}$ denote the column vector counting the number of individuals of each type in the setting $s$. The construction then amounts to representing the matrix $\hat E$ by the weighted  sum
\be
\hat E=\sum_{s\in\CS} z(s)\left(\phi^{(s)}*\phi^{(s)\top}-\diag(\phi^{(s)})\right)\label{contact}
\ee
whose $T,T'$ component is $$\hat E_{TT'}=\sum_{s\in\CS} z(s)\left(\phi^{(s)}(T)\phi^{(s)}(T')-\phi^{(s)}(T)\delta_{TT'}\right)\ $$ where $\delta_{TT'}$ denotes the Kronecker delta. 
Note that $*$ in \eqref{contact} represents an outer-product of vectors, that in general yields a rectangular matrix. This sum over settings can be disaggregated into different types of settings, such as school, hospital and workplace, leading to contact matrices within subcommunities.

\subsection{Buffers and Exposures}
Recall from the previous section that exposures are assumed to take values on the integer grid $\{0,1,2,\dots,\omegamax\}$ for some moderately large integer $\omegamax$, where $1$ represents a choice of a unit dose. It should be supposed that $\Omega_e$ is observed for a certain random sample of directed $T\to T'$ edges  $e$. It is then reasonable to infer empirical distributions $\rho_\Omega(\cdot,T,T')$  from a parametric family of discrete distributions on $\{0,1,2,\dots,\omegamax\}$  that match the sample means and variances $\hat\mu_\Omega(T,T'),\hat\sigma^2_\Omega(T,T')$. In a similar way, $ \hat\mu_\Delta(T),\hat\sigma^2_\Delta(T)$ can be estimated from a random sample of observed values of the buffer variable for type $T$ nodes.

Gamma distributions on $\BBR_+$, parametrized by the shape parameter $k>0$ and scale parameter $\theta>0$, form a particularly nice family suitable for theoretical studies of the IRSN framework. Of particular interest are the exponential distributions with  $k=1$, due to their ``memoriless'' property. When $\Delta$ is exponential, the dose-response curve leads to the assumed serial independence of infection arising from successive viral doses. As shown in Section 3.4, this specification  leads to the Euler approximation of a compartment ODE model. Thus when $\Delta$ is exponential, the IRSN model should closely mirror properties of the ODE model. Heuristics seem to suggest however that the true dose-response function, i.e. the CDF of $\Delta$, is better taken to be ``S''-shaped with $k>1$. There is little literature on the statistics of transmitted viral doses from which to infer properties of the exposures $\Omega$, although  COVID-19 pharyngal swab test studies such as \cite{Jones_viralloads_2020} provide some insight. Interestingly, that study suggests that COVID-19 viral loads measured for individuals that tested positive may to be very fat-tailed.  For our present exploratory purposes,  gamma distributions may reasonably be used for both $\Delta$ and $\Omega$.  

 
\subsection{Infective Contact Parameters} Finally, one needs to identify the fractions $z(\cdot,\cdot)$ of  social contacts that are close infective contacts. First note that these fractions can be directly targeted by health policy, and will therefore be manipulated and changed dramatically during the pandemic. Since the fraction $z(T,T')$ applies where the type $T$ individual is infectious and the type $T'$ individual is susceptible,  lowering this parameter by restricting type $T$ behaviour will directly reduce the exposure of type $T'$ individuals. 

Since there is little fundamental theory that informs the choices of buffer and exposure distributions, and the $z$ parameters are the direct target of policy interventions, the most practical approach to $z$ is to chose benchmark values $\hat z(T,T')$ that provide a strong match for the contagion dynamics in the earliest stages of the pandemic. Thus, for example,  the observed effective $R$-naught value, $R_0$, should be used to calibrate the benchmark values.  Subsequently, the value of the $z$ parameters will need to be adjusted to account for proposed and actual changes in policy.

\section{Illustrative Example: Seniors' Residential Centre}
The purpose of this example is to provide an easy-to-visualize context for the IRSN framework, namely the setting of a seniors residence with 100 residents (type $T=1$), 50 trained staff workers (type $T=2$) within a town of total population $N_0= 10000$. We also consider the same IRSN specification scaled up by an integer multiplier $N=jN_0$. 

In anticipation of an oncoming contagion, the workers have been trained to high standards of hygiene and care and the residents (who are elderly but healthy) have been instructed in social-distancing and hygiene. The townspeople (``outsiders'', with type $T=3$) on the other hand have only average ability to social distance, and so the contagion hits the town before the centre. The goal of this example is to investigate the vulnerability of the residential centre to internal contagion starting in the outside town. 

The benchmark network parameters are given in Table \ref{KVRL_params}, together with numerical implementation parameters $\omegamax=60, \Ndft=256$. The buffers $\Delta$ and exposures $\Omega$ are all taken to be Gamma-distributed with shape parameter $k=3$, and means $\mu$ and standard deviations $\mu/\sqrt{3}$  that depend on type. 

\begin{table}[htp]
\caption{Benchmark Parameters: Note that $\kappa(T',T)\BBP(T)$ is the expected daily number of social contacts of a type $T'$ individual to type $T$ individuals.  }
\small{\begin{center}
\begin{tabular}{|c||c|c|c|}
\hline
&Resident $T=1$&Worker $T=2$&Outsider $T=3$\\
\hline\hline
$\beta(T)$&0.09&0.09&0.09\\ \hline
$\gamma(T)$&0.3&0.3&0.3\\ \hline
$z(T)$&0.20&0.20&0.20\\ \hline

$\BBP(T)$&0.01&0.005&0.985\\\hline
 $\kappa(1,T)\BBP(T)$& 4 &   5 & 0\\  
 $\kappa(2,T)\BBP(T)$&   10  &  5  &  4\\
  $\kappa(3,T)\BBP(T)$&   0 &   0.0203 &   20\\
  \hline
    $\mu_\Omega(1,T)$&7&7&7\\
    $\mu_\Omega(2,T)$&4&4&4\\
 $\mu_\Omega(3,T)$&6&6&6\\
   \hline
 $\mu_\Delta(T)$&20&30&30\\
\hline

   \end{tabular}
\end{center}}
\label{KVRL_params}
\end{table}

The upper left plot of Figure \ref{Fig1} shows the daily exposed, infective and removed fractions for the three types, in the benchmark SEIR model without further policy interventions, plotted from the day that the number of exposed outsiders exceeds 1\% of the population. We see that the contagion starts in the outside community, but rapidly invades the centre, resulting in similar infection rates, with a time delay. One can interpret the result as overlapping sub-epidemics: the first hits the outside community, while a second and third hit the residence workers and residents about 16 and 22 days later, respectively. One can see that the strategy failed for two reasons: first, the contagion was allowed to gain a foothold in the centre and infect a resident; second, the  hygiene within the centre was not adequate to contain the resulting seed infection. 
\begin{figure}[ht]
\centering
\includegraphics[width=0.5\columnwidth]{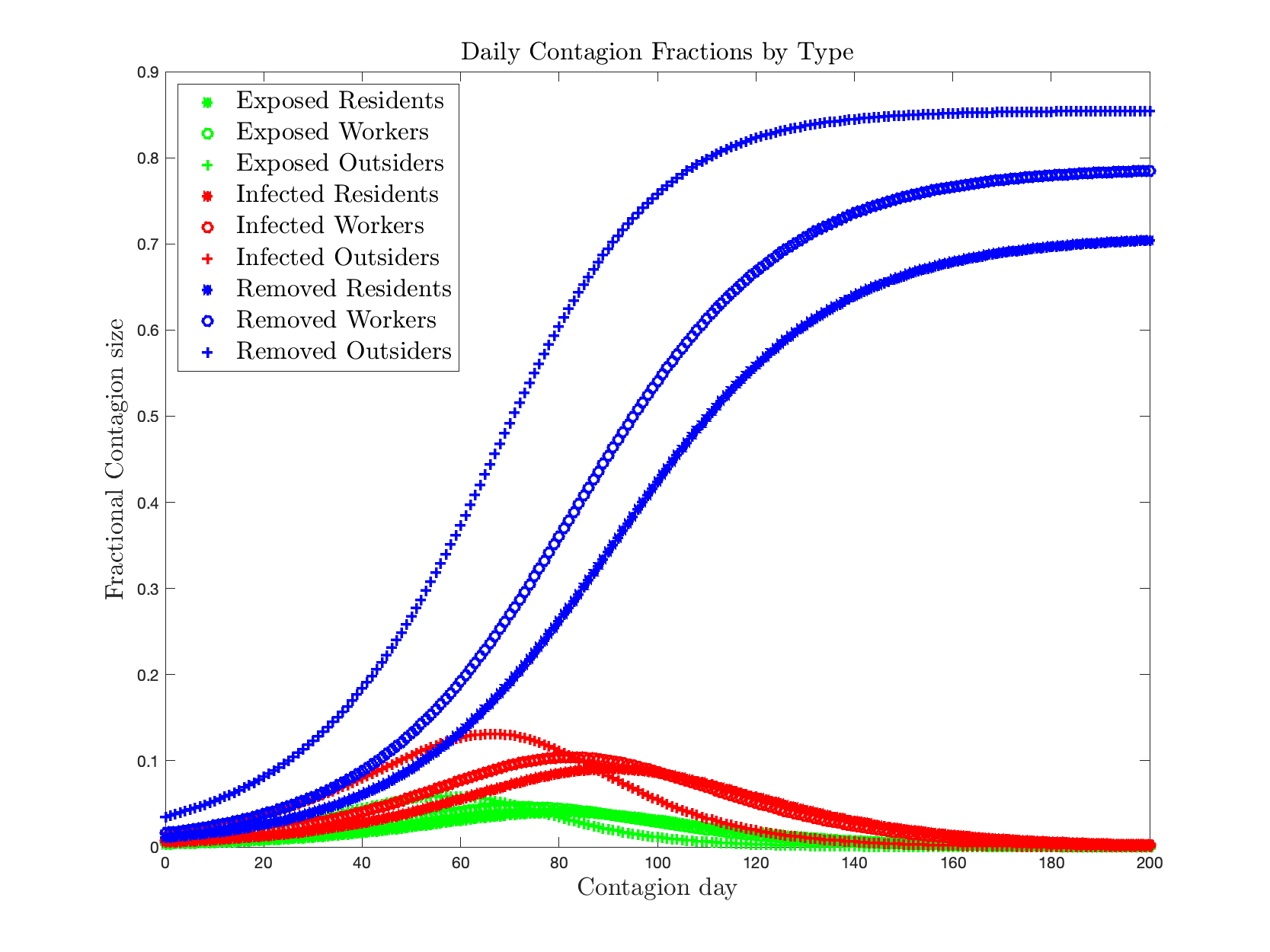}\includegraphics[width=0.5\columnwidth]{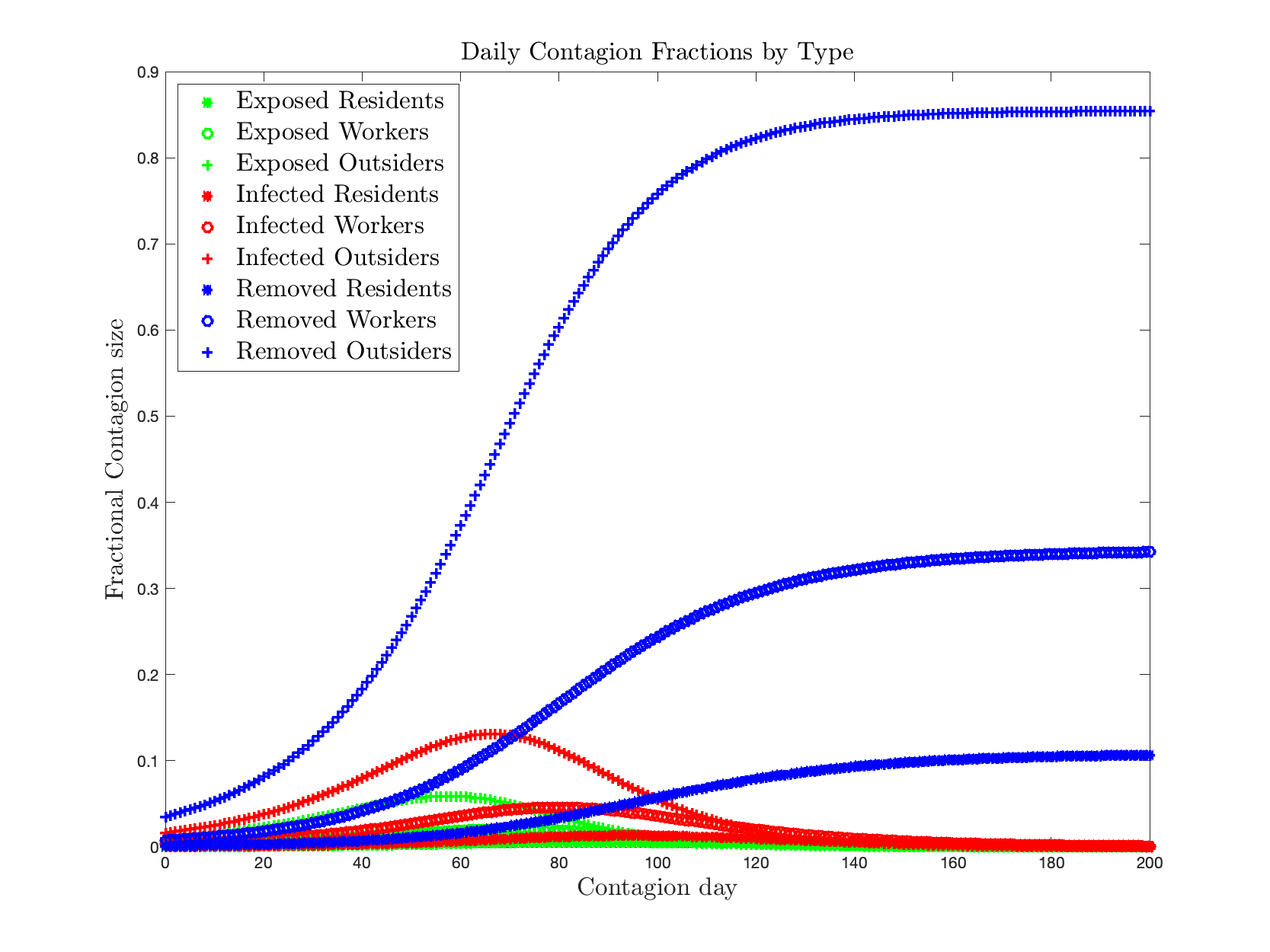}\\
\includegraphics[width=0.5\columnwidth]{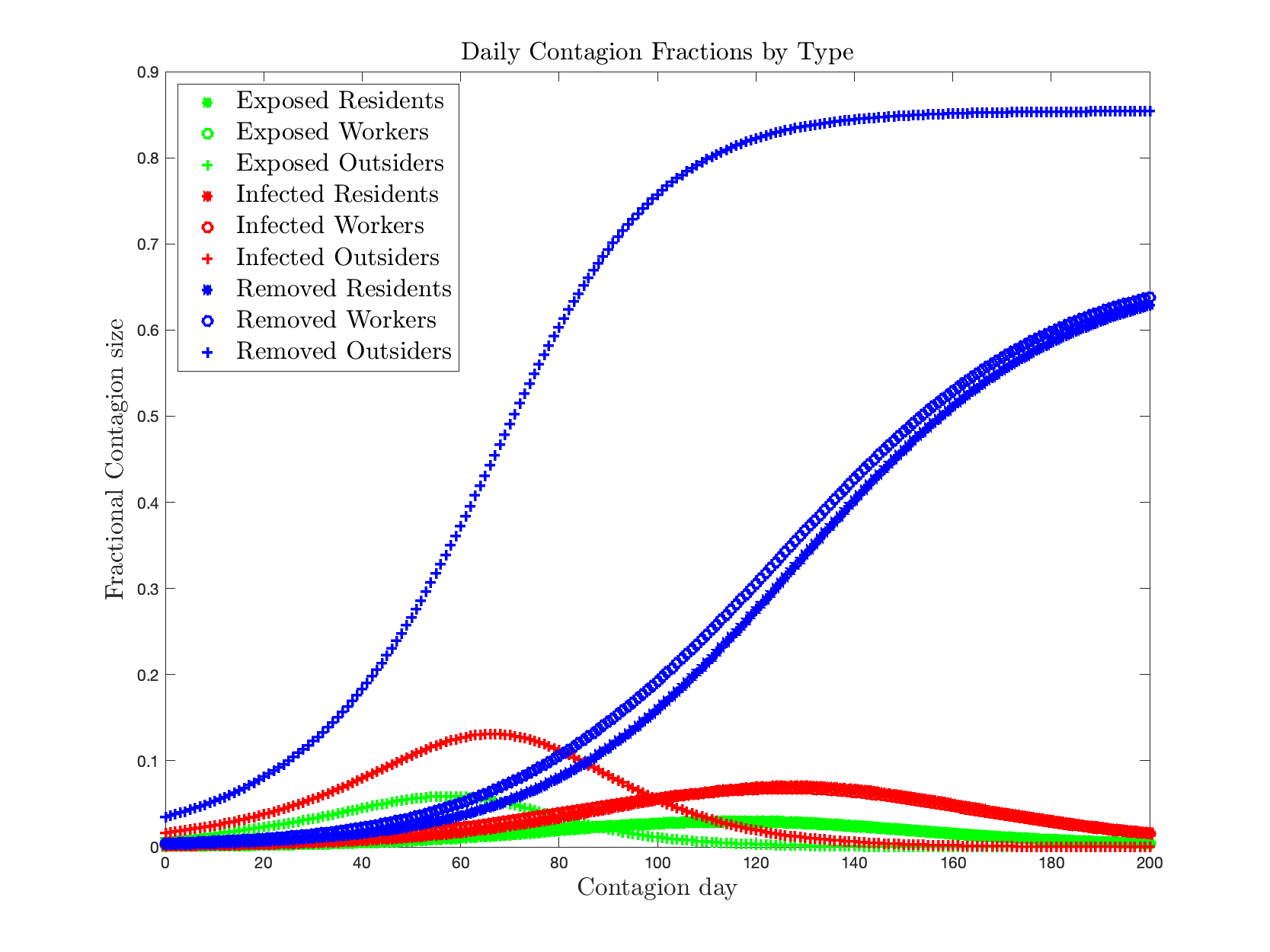}\includegraphics[width=0.5\columnwidth]{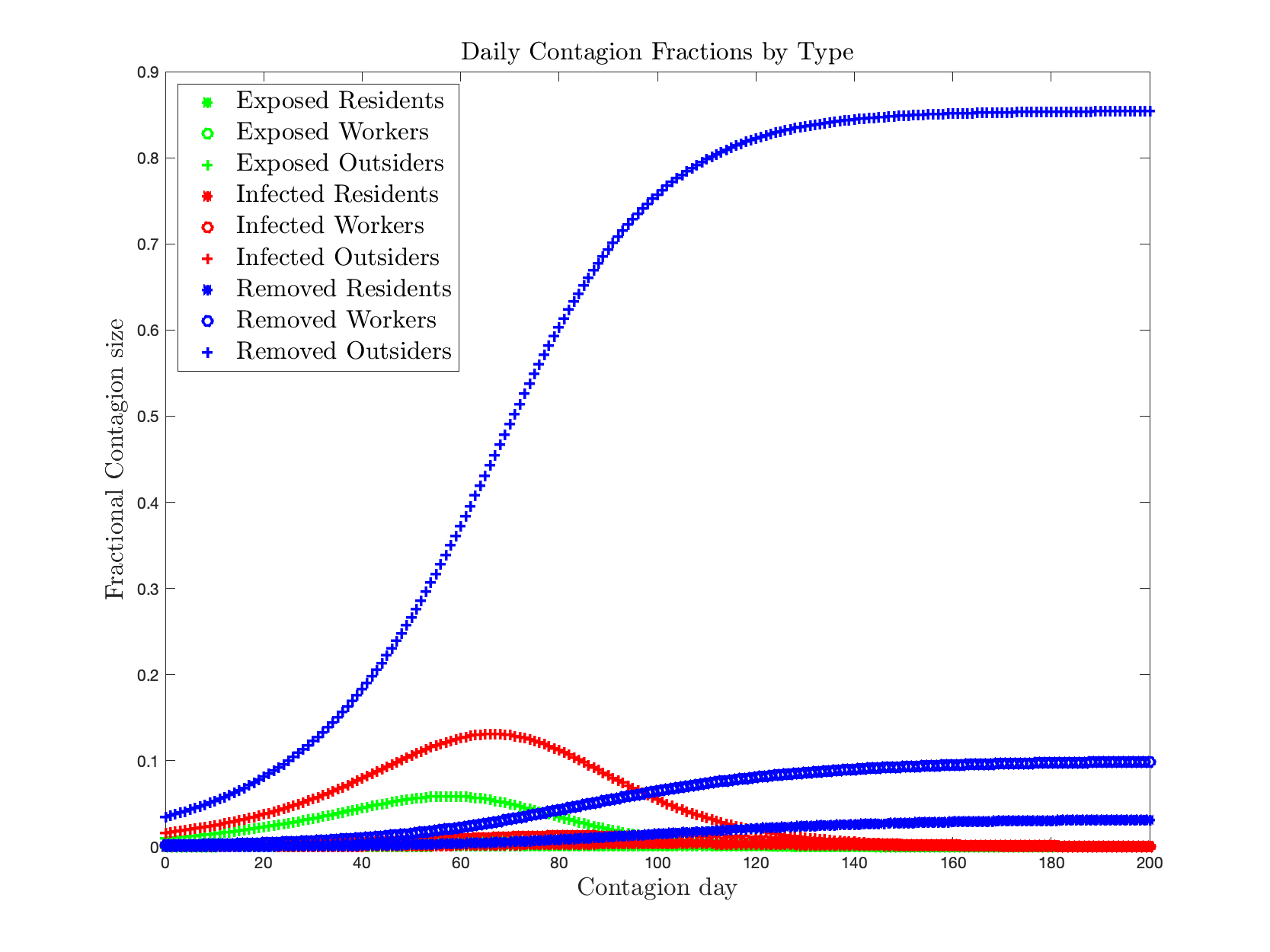}
\caption{Fractional contagion size by type and compartment in the Senior's Residential Centre Model of Section 6. Top Left: Benchmark strategy; top right: Strategy A; bottom left: Strategy B; bottom right: both Strategies A and B}\label{Fig1}

\end{figure}

What further policy improvements implemented by the management might lead to a better result? The remaining plots in Figure \ref{Fig1} show the results for several combinations of policy interventions. Strategy A is to improve internal hygiene by quarantining all residents and dramatically reducing contacts between workers: $\lambda(1,1)$ changes from $4$ to $0.5$ and $\lambda(2,2)$ changes from $5$ to $1$. Strategy B is to dramatically reduce the connectivity between the centre and the outside: $\lambda(2,3)$ changes from $4$ to $0.5$. We observe Strategy A manages to reduce the contagion to about 10\% of the residents, but allows a continual reintroduction of infection from outside. Strategy B fails outright: reducing the connections to outsiders simply delays the onset of contagion within the centre by about 30 days. However, the combination of both strategies A and B led to a success in keeping 97\% of the residents healthy.

These policy interventions target the social connectivity in the network through social distancing and quarantine. Another important channel would be to reduce the mean viral exposures entering in the exposure PDFs, by measures such as encouraging more cleanliness and the use of masks. Yet another channel is to improve individual immunity buffers by vaccination or other health improvements. 


 Large $N$ networks typically exhibit ``resilient'' states that are intrinsically resistant to  contagion and ``susceptible'' states that amplify any introduced infection. Moreover they can be made to  transition discontinuously from a resilient state to a susceptible state by varying a key parameter, such as the infective contact parameter $z$ that measures the degree of social distancing in the network. Figure \ref{Fig2} shows the long-time values of the removed fractions, as functions of $z$. One sees the remarkable transition from resilient to susceptible at a critical value $z^*\sim 0.055$. This single graph shows clearly the general principle that any contagion can be prevented at the outset by sufficiently strong restrictions on social interactions. 
 
 \begin{figure}[ht]
\centering
\includegraphics[width=0.65\columnwidth]{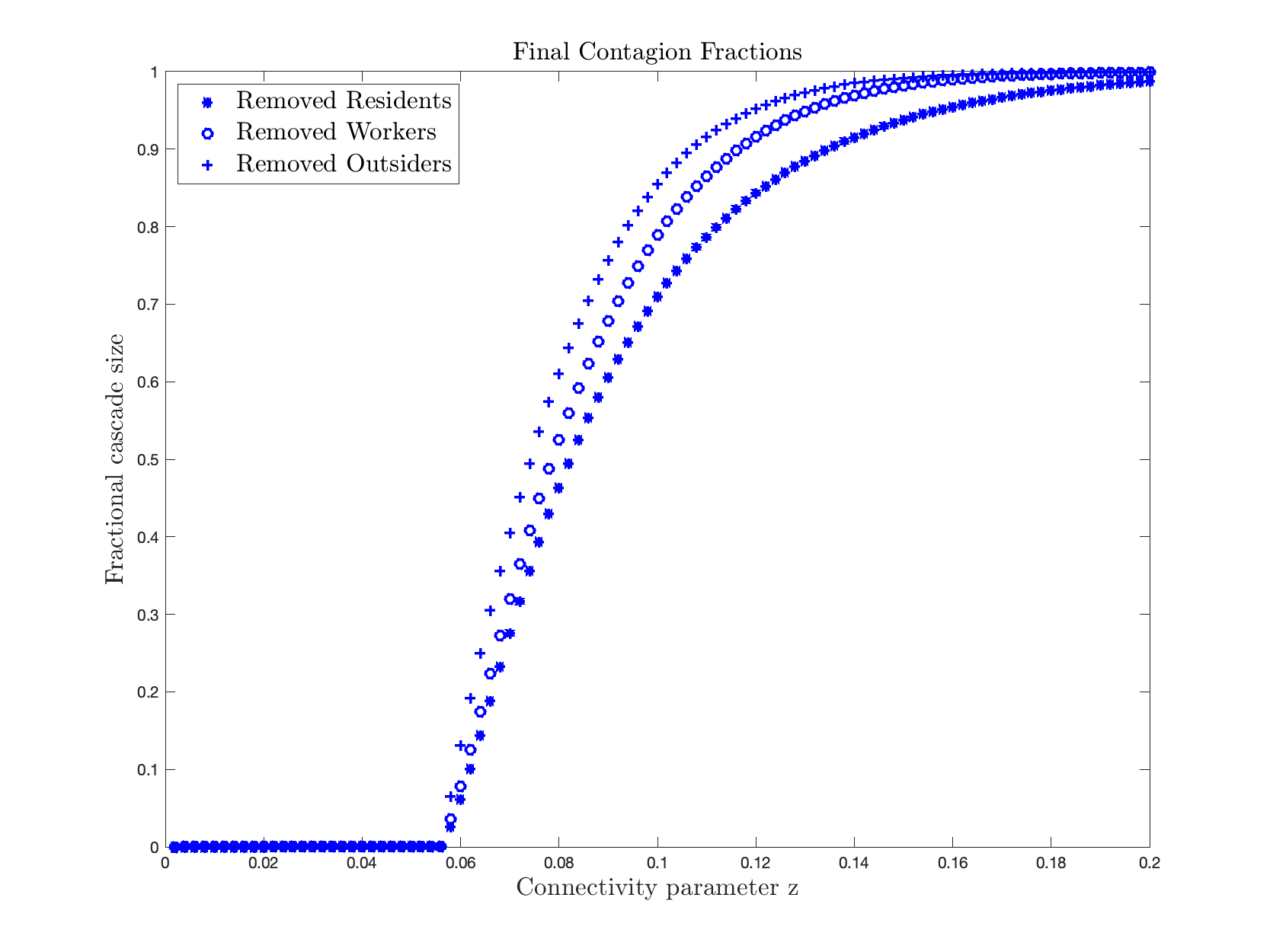}\caption{Final removed fractions as a function of $z$ for the benchmark Senior's Residential Centre Model of Section 8.}\label{Fig2}

\end{figure}

%
%
%

\section{Discussion}

The primary intention of this paper is to set out the fundamental assumptions and their consequences, for a novel network approach to epidemic modelling in very heterogeneous settings. To keep focussed on this aim, many potential examples and  avenues of inquiry have not been explored here. Instead, let us end this paper by discussing briefly how the novel features of the IRSN framework can be used in different fields to improve our understanding of COVID-19.
\begin{enumerate}
  \item {\bf To inform health policy:\ } A wide variety of scenarios such as the spread of disease between communities  can be explored within this framework. Once the IRSN model has been fully specified and calibrated to a real world setting, the analytical algorithm is straightforward to run. Since the IRSN starts with very different assumptions to standard tools such as the compartment ODE models, the exercise of implementing the IRSN forces policy makers to think in a different way about epidemics. This kind of modelling exercise will lead to more robust and reliable decisions that depend less on specific underlying assumptions. 
   \item {\bf To inform health research:\ } The IRSN framework can be extended to encompass a broad set of  characteristics that describe the immunology of COVID, the behaviour of human society and the effect of public health policy. Many details of the disease, particularly those connected with the threshold picture of viral transmission, are still inadequately understood. The IRSN can be used by researchers to study which gaps in data and knowledge may be leading to the greatest uncertainties in projections. This will suggest where scarce research funding should be best deployed. 
\item {\bf To inform network science:\ } The large $N$ analytical shortcut used in this paper is well known in network science, but has not yet been used in disease modelling. The meaning, accuracy and limitations of this shortcut will be of interest to other network modellers. A particularly interesting subtlety worthy of further research is to better understand the selection biases inherent in cascade dynamics on stochastic networks.   As well, the IRSN setting, being very analogous to the Inhomogeneous Random Financial Networks introduced in \cite{Hurd19a},  should be deployable in many other network applications. Modellers will observe that computational complexity is determined by the parameters $\Ndft,M$, and it is of interest to explore the tradeoffs when allocating  computational resources to a complex modelling problem at hand.   \item {\bf As a teaching tool: \ } Being easy to run on MATLAB or Python, the IRSN framework can be used in higher education as a learning and visualization tool that focusses on mathematical modelling assumptions for epidemics and their consequences. More broadly, these tools may be helpful in fostering public awareness of the most important societal issues, notably the effectiveness of targeted social distancing, that successful COVID health policy must address.\end{enumerate}

\section{Acknowledgements} This project was funded by the Natural Sciences and Engineering Research Council of Canada and the McMaster University COVID-19 Research Fund. The author is grateful to Hassan Chehaitli, Vladimir Nosov, Weijie Pang and Irshaad Oozeer for extensive discussions during the writing of this paper. 
\bibliographystyle{plainnat}


\end{document}